\newtheorem{assmpn}{Assumption}[section]
\newtheorem{rem}{Remark}[section]
\newtheorem{thm}{Theorem}
\newtheorem{lemma}{Lemma}
\def\ep{\epsilon}
\def\R{{\mathbb{R}}}
\def\xept{X_{\ep,t}}
\def\yept{Y_{\ep,t}}
\def\s{\sigma}
\def\rt{R_{\ep,t}}
\def\Rt{\tilde{R}_{\ep,t}}
\def\zs{Z_{\ep,s}^{(1)}}
\def\zzs{Z_{\ep,s}^{(2)}}
\def\Zs{\tilde{Z}_{\ep,s}^{(1)}}
\def\ZZs{\tilde{Z}_{\ep,s}^{(2)}}
\def\l{\left}
\def\r{\right}
\def\p{\partial}
\def\L{\mathcal L}
\def\B{\mathcal B}
\def\Lnl{\L^{NL}_2}
\def\ul{\underline}
\def\ol{\overline}
\newcommand{\Keywords}[1]{\par\noindent 
{\small{\em Keywords\/}: #1}}
\title{Effect of volatility clustering on indifference pricing of options by convex risk measures}
\author{ Rohini Kumar  \thanks{Department of Mathematics,
      Wayne State University, Detroit, MI 48202 ({\tt rkumar@math.wayne.edu})}
      \thanks{Work  supported in part by National Science Foundation grant DMS 1209363.}
         }
\begin{document}
\maketitle

\begin{abstract}
In this paper, we look at the effect of volatility clustering on the risk indifference price of options described by Sircar and Sturm in the paper `From smile asymptotics to market risk measures'  \cite{SS}. The indifference price in \cite{SS} is obtained by using dynamic convex risk measures given by backward stochastic differential equations (BSDEs). Volatility clustering is modeled by a fast mean-reverting volatility in a stochastic volatility model for stock price. Asymptotics of the indifference price of options and their corresponding implied volatility are obtained in this paper, as the mean-reversion time approaches zero. Correction terms to the asymptotic option price and implied volatility are also obtained. \\

\Keywords{risk measures,  indifference price,  implied volatility,  volatility clustering,  comparison principle.}
\end{abstract}

\section{Introduction}
In an incomplete market, there are several ways of pricing options. The arbitrage-free method of option pricing,  where the option price is given by the expected value of the discounted payoff under a risk-neutral equivalent martingale measure, is widely known and studied (see for example \cite{FPS00}). Another method is the indifference pricing method. Indifference pricing of options using utility functions have been studied extensively in \cite{HN89}, \cite{MZ04}, \cite{RKar00}, etc. Later it was observed that the same idea of indifference pricing could be extended from utility functions to dynamic convex risk measures, see \cite{BK09}.  Sircar and Sturm in \cite{SS}  derived a nonlinear partial differential equation (PDE) characterizing the indifference price of put options given by dynamic convex risk measures. In their paper, they used the residual risk measure after hedging for pricing options; the residual risk measure was given in terms of a BSDE.
They also obtained the implied volatility, corresponding to this indifference option price, as the viscosity solution of a nonlinear PDE. 

The significance of the results in \cite{SS} is that, via the indifference pricing scheme,  the market risk, reflected in the implied volatility skew, can be related to convex risk measure theory. Typically, convex risk measures are defined abstractly via BSDEs (see \cite{BK09}).  So the right driver in the BSDE which gives a good risk measure is hard to determine.  Using the indifference pricing of Sircar and Sturm, we can calibrate the driver from the market implied volatility data.  In their paper, \cite{SS}, small-maturity asymptotics of the implied volatility yielded simple formulas. However, in general, a closed form solution to the non linear PDE in\cite{SS}  is hard, if not impossible, to find. In this paper, an effort is made to give some meaningful, simplified formulas for the implied volatility surface, by considering the effect of volatility clustering.

It is believed that market volatility fluctuates frequently between high and low periods. While volatility cannot be observed directly, this ``clustering" behavior is estimated from observed stock prices.  To model stock prices subject to this clustering behavior of volatility we use stochastic volatility models where the volatility is a  fast mean-reverting ergodic process. Such stochastic volatility models with fast mean-reverting volatility were found to be a good fit for stock price data (see chapter 4 of \cite{FPS00}).  
The question of interest in this paper is, how does fast mean-reversion in volatility affect option prices? As the rate of mean-reversion increases, the long-run behavior of the ergodic volatility process manifests. Consequently,  the effect of volatility gets averaged  with respect to the invariant distribution of the ergodic volatility process.
The analysis of fast mean-reversion of volatility on option prices has been studied in the case of no-arbitrage pricing, see chapter 5 of \cite{FPS00}.  In this paper, we will look at the effect of fast mean-reversion of volatility on the indifference prices of options given in \cite{SS}.

  In the no-arbitrage pricing case, option prices are given as the solution of linear PDEs. For small mean-reversion time, denoted by the parameter $\ep$, this leads to a singular perturbation problem. In \cite{FPS00}, assuming the volatility process is an Ornstein-Uhlenbeck  process, the option price is expanded in powers of $\ep$ and the asymptotic option price, when $\ep\to0$, is obtained together with a correction term of order $\sqrt{\ep}$. The corresponding corrected implied volatility is also obtained. 
 
  In this paper, unlike the no-arbitrage case, the option price is not the solution of a linear PDE. In fact, this  indifference price is given in terms of the solutions to BSDEs. Initially, finding the asymptotic indifference price appears to involve 
averaging of BSDEs. 
 However, there is a difficulty: Sircar and Sturm use quadratic drivers in their BSDEs which are not Lipschitz and hence we cannot use the established stability results for BSDEs (for example in \cite{HPeng97}). We instead avail of the nonlinear Feynman-Kac formula derived by Pardoux and Peng in \cite{PP92},  and express the option price in terms of  solutions to nonlinear PDEs; thus converting this into a problem of averaging nonlinear PDEs.
 
 The formal derivation of the corrected option price and implied volatility, in sections \ref{sec:heuristics} and \ref{sec:IV},  follows  the same line of reasoning seen in \cite{FPS00}. The difference in method lies in the rigorous proof of accuracy of the corrected asymptotic formula, which is in section \ref{sec:rigorous}. 
 For the rigorous proof, we use the maximum principle to bound the solution of the nonlinear PDE, which gives the option price, within $o(\sqrt{\ep})$ distance of the corrected asymptotic option price formula. 
  
  The purpose of this paper is two-fold: on one hand, we look at the effect of an observed phenomenon, viz., volatility clustering, on option pricing; secondly, the averaging of the volatility leads to a  simpler  formula for implied volatility, which makes calibration easier. The dependence of the corrected implied volatility formula on the risk parameters can potentially be used to calibrate risk measures from the market implied volatility data.

It should be interesting to see if this work extends to other risk measures besides those given by BSDEs. This however falls outside the scope of this paper and will be considered for future work. In general, regardless of risk measure used,  if the security price can be expressed as a solution of a PDE, we can potentially use the method in this paper for averaging out the effect of fast mean-reverting volatility. The paper is organized as follows. In section \ref{sec:model} we recall the indifference pricing of options from \cite{SS} and introduce the nonlinear PDE that gives the indifference price of put options. Section  \ref{sec:heuristics} has heuristic calculations for the corrected asymptotic option price. The main result and the rigorous proof of accuracy of the obtained corrected asymptotic option price formula is in section \ref{sec:rigorous}.   The corrected implied volatility  formula is obtained in section \ref{sec:IV}. 


\section{Preliminaries} \label{sec:model}We begin by introducing the stochastic volatility model for stock price. 
\subsection{Stochastic volatility model.}
Let $S_{\ep,t}$ denote stock price at time $t$, where the parameter $\ep$ refers to the mean-reversion time scale  for volatility. As $\ep$ approaches $0$, the speed of mean-reversion, $1/\ep$, increases.  Let $(\Omega, \mathcal F, P)$ denote the probability space on which $S_{\ep,t}$ satisfies the following stochastic volatility model.
\begin{subequations}\label{FSDEs}
\begin{align}
dS_{\ep,t}=&b(\yept)S_{\ep,t}dt+\s_1(\yept)S_{\ep,t}dW^{(1)}_t, \quad 0\leq t\leq T,\\
d\yept=&\frac{m-\yept}{\ep}dt+\frac{\s_2(\yept)}{\sqrt{\ep}}(\rho dW_t^{(1)}+\sqrt{1-\rho^2}dW^{(2)}_t), \quad 0\leq t\leq T,\\
(S_{\ep,0},Y_{\ep,0})&=(s,y),
\end{align}
\end{subequations}
where $|\rho|< 1$, $W^{(1)}$ and $W^{(2)}$ are independent Brownian motions on $(\Omega,\mathcal F, P)$. 
We make the same assumptions on the stochastic volatility model as in \cite{SS}. For the reader's convenience we recall these assumptions.
\begin{assmpn}\label{assmpn}
We assume that 
\begin{enumerate}
\item $\s_1, \s_2\in C^{1+\beta}_{loc}(\R)$, where  $C^{1+\beta}_{loc}(\R)$ is the space of differentiable functions with locally H\"older-continuous derivatives with H\"older-exponent $\beta>0$.\\

\item Both $\s_1$ and $\s_2$ are bounded and bounded away from zero:
\[0<\underline{c_1}<\s_1<\overline{c_1}<\infty,\qquad \text{ and }0<\underline{c_2}<\s_2<\overline{c_2}<\infty,\]
\item $b\in C^{0+\beta}_{loc} $, and $b$ is bounded.
\end{enumerate}
\end{assmpn}

Let $\B$ denote the infinitesimal generator of the $Y$ process when $\ep=1$. Then, for $f\in C^2(\R)$, 
\begin{equation}\label{generator-B}
\B f(y):= (m-y)\p_yf(y)+\frac1 2 \s_2^2(y)\p^2_{yy}f(y).
\end{equation}
By the general theory of 1-D diffusions (see Karlin and Taylor~\cite{KT81}, page 221) it is easy to see that there exists a unique probability measure 
\begin{equation}\label{inv-meas}
\pi(dy)=Z^{-1}\frac{\exp\l\{\int_0^y\frac{2(m-z)}{\s_2^2(z)}dz\r\}}{\s_2^2(y)}dy,
\end{equation}
 such that $\int \B f(y) \pi(dy)=0$ for  all $f\in C_c^2(\R)$; $Z$ is the normalizing constant, so that $\int \pi(dy)=1$. The invariant distribution of $Y$ given by \eqref{inv-meas} plays an important role in the following analysis.

\subsection{Indifference option price.}
We consider a European put option with maturity time $T$ and strike price $K$. The indifference price of this European put option at time $t$, $P_{\ep}(t,x,y)$,   
is given in \cite{SS} in terms of risk measures as follows. 
\begin{equation}
P_\ep(t,x,y)=\Rt-\rt,\end{equation}
where $(\tilde{R}_\cdot, \tilde{Z}_\cdot)$ and $(R_\cdot, Z_\cdot)$ are respectively solutions of the following BSDEs
\begin{equation}\label{BSDE1}
\Rt=-\int_t^Tf(\Zs, \ZZs)ds-\int_t^T\Zs dW_s^{(1)}-\int_t^T\ZZs dW_s^{(2)};
\end{equation}
and 
\begin{equation}\label{BSDE2}
\rt=-(K-S_{\ep,T})^+-\int_t^Tf(\zs, \zzs)ds-\int_t^T\zs dW_s^{(1)}-\int_t^T\zzs dW_s^{(2)}.
\end{equation}
The function $f$  in the above BSDEs satisfies the criteria for admissible drivers (see Definition 2.3 in \cite{SS}) to ensure the solvability of the BSDEs. In this paper, we will consider a specific family of admissible drivers called {\it distorted entropic risk measures} which were introduced in \cite{SS} (see section 3.1 of \cite{SS}). 
This class of drivers has the following form:
\[g^{\eta,\gamma}(z_1,z_2):=\frac{\gamma}{2}\l((z_1+\eta z_2)^2+z_2^2\r),\] and is parametrized by two parameters: the risk aversion parameter $\gamma>0$ and the volatility risk premium $\eta$. When $\eta=0$ the driver reduces to the classical entropic risk measure whose level curves are circles with radius depending on the risk aversion parameter, $\gamma$. By introducing $\eta$, we distort this circle into an ellipse (if $|\eta|<1$). Under hedging, the risk measure gets adjusted and is now given by a BSDE where the driver $g^{\eta,\gamma}$ is transformed to 
\begin{align*}
\tilde{g}^{\eta,\gamma}(z_1,z_2)&=\inf_{\nu\in \R}\l(g^{\eta,\gamma}(z_1+\s_1(y)\nu,z_2)+b(y)\nu\r)\\
&=z_1\frac{b(y)}{\s_1(y)}+\frac{b^2(y)}{2\gamma \s^2_1(y)}-\frac{\gamma}{2}z_2^2+\frac{\eta b(y)z_2}{\s_1(y)}.\end{align*}
Henceforth we will take this to be our driver, i.e.  define
\begin{equation*}f(z_1,z_2):=z_1\frac{b(y)}{\s_1(y)}+\frac{b^2(y)}{2\gamma \s^2_1(y)}-\frac{\gamma}{2}z_2^2+\frac{\eta b(y)z_2}{\s_1(y)}.\end{equation*} 
 It is important to note that this driver is not Lipschitz, as it is quadratic in $z_2$

Instead of stock price $S_{\ep,t}$, we will work with the logarithm of  stock price normalized by the strike price: $X_{\ep,t}:=\ln \l(\frac{S_{\ep,t}}{K}\r)$. With this change of variable, \eqref{FSDEs} becomes
\begin{subequations}\label{FSDEs-alt}
\begin{align}
d\xept=&\l(b(\yept)-\frac1 2\s_1^2(\yept)\r)dt+\s_1(\yept)dW^{(1)}_t, \quad 0\leq t\leq T,\\
d\yept=&\frac{m-\yept}{\ep}dt+\frac{\s_2(\yept)}{\sqrt{\ep}}(\rho dW_t^{(1)}+\sqrt{1-\rho^2}dW^{(2)}_t), \quad 0\leq t\leq T,\\
(X_{\ep,0},Y_{\ep,0})&=(x,y).
\end{align}
\end{subequations}

Sircar and Sturm in \cite{SS} use the generalized Feynman-Kac formula, given by Pardoux and Peng in \cite{PP92},  to describe the solutions of the forward-backward SDEs, $\Rt$ and $\rt$, in terms of solutions of the nonlinear PDE \eqref{PDE} below. Before introducing this PDE, we will first make another change of variable $\tau:=T-t$, which gives the time to maturity. Define the differential operator $\L_\ep$ by
\begin{equation}\label{L_ep}
\begin{split}
\L_\ep g:=&\frac{1}{2}\s_1^2(y)\p^2_{xx}g+\frac{1}{2\ep}\s_2^2(y)\p^2_{yy}g+\frac{\rho}{\sqrt{\ep}}\s_1(y)\s_2(y)\p^2_{xy}g\\
&+\l(\frac{1}{\ep}(m-y)-\frac{\rho}{\sqrt{\ep}}\frac{b(y)\s_2(y)}{\s_1(y)}\r)\p_y g-\frac1 2 \s_1^2(y)\p_xg\\
&-\frac{b^2(y)}{2\gamma\s^2_1(y)}+\frac{\gamma(1-\rho^2)}{2\ep}\s_2^2(y)(\p_yg)^2-\frac{\eta\sqrt{1-\rho^2 }}{\sqrt{\ep}}\frac{b(y)\s_2(y)\p_y g}{\s_1(y)},
\end{split}
\end{equation} for $g\in C^{2}(\R^2)$. Let $\tilde{u}_\ep$ and $u_\ep$ denote solutions to the PDE
\begin{equation}\label{PDE}
\p_\tau u=\L_\ep u,\end{equation} with the initial conditions $\tilde{u}_\ep(0,x,y)=0$ and $u_\ep(0,x,y)=-[K-Ke^x]^+$, respectively. By Theorem 2.9 in \cite{SS}, we get the put option price
\begin{equation}\label{Pep}P_\ep(\tau,x,y)=\tilde{u}_\ep(\tau,y)-u_\ep(\tau,x,y).\end{equation}  As mentioned in \cite{SS}, by Ladyshenskaya et al.\ \cite[Theorem V.8.1]{Lad67},   $u_\ep$ and $\tilde{u}_\ep$ are unique bounded classical  solutions to the semilinear parabolic equation \eqref{PDE},  with bounded derivatives in $[0,T]\times \mathbb R\times \R$. 
\begin{rem}Since the coefficients of the PDE in \eqref{PDE} and the initial condition of $\tilde{u}$ are x-independent, we get $\tilde{u}$ to be x-independent.\end{rem}

\begin{rem}
While we only consider European put options, the results in the paper extend to any other option with bounded and continuous payoff. 
\end{rem}

\section{Asymptotic option price}\label{sec:asymptotics_option_price}
We will begin with heuristic arguments for obtaining the asymptotic option price and correction terms to the asymptotic price. The heuristic calculations follow along the same lines as the no-arbitrage option pricing case, seen in \cite{FPS00}
\subsection{Heuristics}\label{sec:heuristics}
 Assume the following expansions of $u_\ep, \tilde{u}_\ep$ and $P_\ep$ in powers of $\sqrt{\ep}$:
\begin{subequations}
\begin{align}
u_\ep &=u_0+\sqrt{\ep}u_1+\ep u_2+\ep^{3/2}u_3+\hdots, \label{u-exp}\\
\tilde{u}_\ep&=\tilde{u}_0+\sqrt{\ep}\tilde{u}_1+\ep \tilde{u}_2+\ep^{3/2}\tilde{u}_3+\hdots, \label{tildeu-exp}\\
P_\ep&=P_0+\sqrt{\ep}P_1+\ep P_2+\hdots.\label{P-exp}
\end{align}
\end{subequations}
We define the following differential operators: for $g\in C^{2}( \mathbb R\times \mathbb R)$, 
\begin{eqnarray*}
\L_0g(x,y):&=&\frac1 2 \s_1^2(y)\p^2_{xx}g(x,y)-\frac1 2 \s_1^2(y)\p_xg(x,y)-\frac{b^2(y)}{2\gamma\s_1^2(y)},\\
\L_1g(x,y):&=&\rho\s_1(y)\s_2(y)\p^2_{xy}g(x,y)\\
&&-\l(\rho+\eta\sqrt{1-\rho^2 }\r)\frac{b(y)\s_2(y)}{\s_1(y)}\p_y g(x,y),\\
\Lnl g(x,y):&=&\B g(x,y)+\frac{\gamma(1-\rho^2)}{2}\s_2^2(y)\l(\p_yg(x,y)\r)^2.
\end{eqnarray*}
Note that the operator $\Lnl$ is nonlinear.
Equation \eqref{PDE} can be rewritten as 
\begin{equation}\label{PDE-op}
\p_\tau u_\ep=\L_\ep u_\ep= \L_0u_\ep+\frac{1}{\sqrt{\ep}}\L_1u_\ep+\frac{1}{\ep}\Lnl u_\ep.
\end{equation}

\subsubsection{Leading order term}
Using the expansion \eqref{u-exp} in \eqref{PDE-op} and collecting terms of order $1/\ep$ we get
\[\B u_0= -\frac{\gamma(1-\rho^2)}{2}\s_2^2(y)(\p_yu_0)^2,\] which is satisfied if $u_0$ is $y$ independent. We will thus assume $u_0(\tau,x)$ is independent of $y$. Using this and collecting terms of order $1/\sqrt{\ep}$ we get
\[\B u_1=0,\] which is satisfied by taking $u_1$ independent of $y$. As $u_0(\tau,x)$ and $u_1(\tau,x)$ are both $y$ independent, terms of $O(1)$ in equation \eqref{PDE} satisfy
\[\p_\tau u_0(\tau,x)=\L_0 u_0(\tau,x)+\B u_2.\]
Thus, $u_2$ must satisfy the Poisson equation
\begin{equation}\label{Poisson-u2}\B u_2=\p_\tau u_0(\tau,x)-\L_0 u_0,\,\end{equation}
which has a solution provided the following centering condition holds:
\begin{equation}\label{centering}
\p_\tau u_0=\frac1 2 \overline{\s_1}^2\p^2_{xx}u_0-  \frac1 2\overline{\s_1}^2(y)\p_xu_0 - \frac{1}{2\gamma}\overline{\frac{b^2(y)}{\s_1^2(y)}}.
\end{equation}
 Here $\overline{\s_1}^2$ and $\overline{\frac{b^2(y)}{\s_1^2(y)}}$ denote the average of the terms $\s_1^2(y)$ and $\frac{b^2(y)}{\s_1^2(y)}$, respectively,  with respect to the invariant distribution, $\pi$, of the $Y$ process.
Similarly, $\tilde{u}_0$ satisfies  equation \eqref{centering}. The initial conditions for $u_\ep$ and $\tilde{u}_\ep$ give the initial conditions for $u_0$ and $\tilde{u}_0$ respectively, i.e. $\tilde{u}_0(0)=0$ ($\tilde{u}_0$ is independent of both $x$ and $y$, so only a function of $\tau$) and $u_0(0,x)=-[K-Ke^x]^+$. 
Observe that the first order approximation term to the option price,  $P_0(\tau,x)=\tilde{u}_0(\tau)-u_0(\tau,x)$, satisfies the equation \[\p_tP_0=\frac1 2 \overline{\s_1}^2\p^2_{xx}P_0-  \frac1 2\overline{\s_1}^2(y)\p_xP_0,\qquad P_0(0,x)=[K-Ke^x]^+,\]
which is simply the equation for the Black-Scholes put option price, $P_{BS}(\tau,x;\overline{\s_1})$,  with volatility parameter $\overline{\s_1}.$ Therefore,
\begin{equation}\label{P0}P_0(\tau,x)=P_{BS}(\tau,x;\overline{\s_1})=KN(-d_2)-Ke^xN(-d_1),\end{equation}where $N(z)=\frac{1}{\sqrt{2\pi}}\int_{-\infty}^z e^{-y^2/2}dy$, $d_1=\frac{x+\frac1 2 \overline{\s_1}^2\tau}{\overline{\s_1}\sqrt{\tau}}$ and 
$d_2=\frac{x-\frac1 2 \overline{\s_1}^2\tau}{\overline{\s_1}\sqrt{\tau}}$. Observe that
\begin{equation}\label{u0}u_0(\tau,x)=P_0(\tau,x)- \frac{1}{2\gamma}\overline{\frac{b^2(y)}{\s_1^2(y)}}\tau.\end{equation}
 
 \subsubsection{Higher order terms}
 Define $\hat{\L}_0:=\L_0+\frac{b^2}{2\gamma\s^2_1}=\frac1 2 \s_1^2\p^2_{xx}-\frac1 2 \s_1^2\p_x$. So $\hat{\L}_0$  is a linear operator.
 Equating terms of order $\sqrt{\ep}$ in \eqref{PDE} gives the equation
\[ \p_\tau u_1=\hat{\L}_0u_1+\L_1u_2+\B u_3.\]
Thus  $u_3$ is the solution of the  Poisson equation 
\begin{equation}\label{u3-equation}
\B u_3= \p_\tau u_1-\hat{\L}_0u_1-\L_1u_2.\end{equation}
 provided the following  centering condition is satisfied:
\[\p_\tau u_1=\overline{\hat{\L}_0u_1+\L_1u_2}.\]
Henceforth, a line over any term indicates averaging of the term with respect to the invariant distribution of the $Y$ process given in \eqref{inv-meas}.
Since $u_1$ is independent of $y$, the above equation becomes
\begin{equation}\label{u1-1}
\p_\tau u_1=\frac1 2 \overline{\s_1}^2\p^2_{xx}u_1-\frac1 2 \overline{\s_1}^2\p_xu_1+\overline{\L_1u_2}.
\end{equation}
To simplify the right hand side of the above, recall that $u_2$ is the solution of the Poisson equation in \eqref{Poisson-u2}. Together with \eqref{centering}, we see that $u_2$ is the solution of the following equation,
\[\B u_2(\tau,x,y)=\frac{b^2(y)}{2\gamma\s_1^2(y)}-\overline{\frac{b^2(y)}{2\gamma\s_1^2(y)}}+\frac1 2\l(\s_1^2(y)-\overline{\s_1}^2\r)\l(\p_{x}u_0(\tau,x)- \p^2_{xx}u_0(\tau,x)\r).\]
Let $\phi_1(y)$ and $\phi_2(y)$ denote the solutions of 
\[\B \phi_1(y)=\frac{b^2(y)}{2\gamma\s_1^2(y)}-\overline{\frac{b^2(y)}{2\gamma\s_1^2(y)}}\] and \[\B \phi_2(y)=\frac1 2 \l(\s_1^2(y)-\overline{\s_1}^2\r)\] respectively. 
Define \begin{equation}\label{U2}
U_2(\tau,x,y):= \phi_1(y)+\phi_2(y)\l(\p_x u_0(\tau,x)-\p^2_{xx}u_0(\tau,x)\r).\end{equation}
Then \begin{equation}\label{u2}
u_2(\tau,x,y)=U_2(\tau,x,y) +F(\tau, x)
\end{equation}
where $F(\tau,x)$ will be determined later, see \eqref{F}.
\begin{rem}\label{Rem-U2}
Note that the function $U_2(\tau,x,y)$ satisfies equations \eqref{Poisson-u2} and \eqref{u3-equation}.\end{rem}
We compute 
\begin{align*}
&\overline{\L_1u_2}\\
&=\overline{\rho\s_1(y)\s_2(y)\phi^\prime_2(y)}\l(\p^2_{xx}u_0(\tau,x)-\p^3_{xxx}u_0(\tau,x)\r)\\
&-\l(\rho+\eta\sqrt{1-\rho^2 }\r)\overline{\frac{b(y)\s_2(y)}{\s_1(y)}\phi^\prime_2(y)}\l(\p_x u_0(\tau,x)-\p^2_{xx}u_0(\tau,x)\r)\\
&\quad-\l(\rho+\eta\sqrt{1-\rho^2 }\r)\overline{\frac{b(y)\s_2(y)}{\s_1(y)}\phi^\prime_1(y)}.
\end{align*}
Substituting this in  \eqref{u1-1} we get
\begin{align*}
\p_\tau u_1=&\frac1 2 \overline{\s_1}^2\p^2_{xx}u_1-\frac1 2\overline{\s}_1^2(y)\p_xu_1-A\p^3_{xxx}u_0(\tau,x)\\
&+(A+B)\p^2_{xx}u_0(\tau,x)-B\p_xu_0(\tau,x)-\tilde{A},
\end{align*}
where \begin{equation}\label{A}A=\overline{\rho\s_1(y)\s_2(y)\phi^\prime_2(y)},\end{equation} \begin{equation}\label{tildeA}\tilde{A}=\l(\rho+\eta\sqrt{1-\rho^2 }\r)\overline{\frac{b(y)\s_2(y)}{\s_1(y)}\phi^\prime_1(y)}\end{equation} and \begin{equation}\label{B}B=\l(\rho+\eta\sqrt{1-\rho^2 }\r)\overline{\frac{b(y)\s_2(y)}{\s_1(y)}\phi^\prime_2(y)}.\end{equation} It is easy to verify that the solution to the above equation is
\begin{equation}\label{u1-formula}
u_1(\tau,x)=\tau\l[ -A\p^3_{xxx}u_0(\tau,x)+(A+B)\p^2_{xx}u_0(\tau,x)-B\p_xu_0(\tau,x)-\tilde{A}   \r].
\end{equation}
By a similar argument, we get 
\begin{equation*}
\tilde{u}_1(\tau)=\tau\l[ -A\p^3_{xxx}\tilde{u}_0(\tau)+(A+B)\p^2_{xx}\tilde{u}_0(\tau)-B\p_x\tilde{u}_0(\tau)-\tilde{A} \r] = -\tilde{A}\tau .
\end{equation*}
Thus, \begin{equation}\label{P1}P_1(\tau,x)=\tau\l[ -A\p^3_{xxx}P_0(\tau,x)+(A+B)\p^2_{xx}P_0(\tau,x)-B\p_xP_0(\tau,x)   \r].\end{equation}

\subsubsection{Terms of order $\ep$}\label{sec:u2}
Using the asymptotic expansion of $u_\ep$ in \eqref{PDE} and collecting terms of $O(\ep)$, we get
\[\p_\tau u_2=\hat{\L}_0u_2+\L_1 u_3+\B u_4+\frac{\gamma(1-\rho^2)}{2}\s_2^2(\p_y u_2)^2.\]
Thus $u_4$ is the solution of the Poisson equation 
\begin{equation}\label{u4}
\B u_4=\p_\tau u_2-\hat{\L}_0u_2-\L_1 u_3-\frac{\gamma(1-\rho^2)}{2}\s_2^2(\p_y u_2)^2,\end{equation} provided the following centering condition holds:
\begin{equation}\label{F}
\overline{\p_\tau u_2-\hat{\L}_0u_2-\L_1 u_3-\frac{\gamma(1-\rho^2)}{2}\s_2^2(\p_y u_2)^2}=0.\end{equation}
The unknown function $F(\tau,x)$ in \eqref{u2} is determined by the above equation \eqref{F}.

\subsection{Accuracy of corrected asymptotic price formula}\label{sec:rigorous}
\begin{thm}\label{asymp-price}
Let $P_\ep$ denote the put option price given by \eqref{Pep}, and let $P_0$ and $P_1$ be as defined in \eqref{P0} and \eqref{P1}, respectively. Then, under assumptions \ref{assmpn}, \[|P_\ep(\tau,x,y)-(P_0(\tau,x)+\sqrt{\ep}P_1(\tau,x))|\leq O(-\ep\log\ep).\] 
\end{thm}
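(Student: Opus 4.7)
The plan is to adapt the Fouque-Papanicolaou-Sircar expansion argument from \cite{FPS00} to the semilinear PDE \eqref{PDE-op}. I would construct the approximate solution
\[\hat{u}_\ep(\tau,x,y) := u_0(\tau,x) + \sqrt{\ep}\,u_1(\tau,x) + \ep\,u_2(\tau,x,y) + \ep^{3/2} u_3(\tau,x,y),\]
with $u_0, u_1$ as in \eqref{u0}, \eqref{u1-formula}, $u_2$ as in \eqref{u2} with $F$ fixed by the centering condition \eqref{F}, and $u_3$ a $\pi$-centered solution of the Poisson equation \eqref{u3-equation}. Plugging $\hat u_\ep$ into \eqref{PDE-op} and collecting powers of $\sqrt{\ep}$, the terms at orders $\ep^{-1}, \ep^{-1/2}, 1$, and $\sqrt{\ep}$ all vanish by the heuristic construction. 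The surviving residual $Q_\ep := \p_\tau\hat u_\ep - \L_\ep\hat u_\ep$ is of order $\ep$; however, because $u_0$ coincides with the Black--Scholes put price $P_{BS}(\tau,x;\overline{\s_1})$ up to an additive function of $\tau$, and the put payoff is only Lipschitz, the higher $x$-derivatives of $u_0$ (and consequently of $u_1, u_2$) blow up as $\tau\downarrow 0$ at polynomial rates $\tau^{-k/2}$. This produces an estimate of the form $\|Q_\ep(\tau,\cdot,\cdot)\|_\infty \le C\ep\tau^{-\alpha}$ for some $\alpha \le 1$.

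Setting $w_\ep := u_\ep - \hat u_\ep$, I would use the identity
\[(\p_y u_\ep)^2 - (\p_y\hat u_\ep)^2 = \bigl(\p_y u_\ep + \p_y\hat u_\ep\bigr)\,\p_y w_\ep,\]
where the coefficient is uniformly bounded in $(\tau,x,y,\ep)$ by Theorem V.8.1 of \cite{Lad67} (for $u_\ep$) and by explicit computation (for $\hat u_\ep$), to linearize $\L_\ep u_\ep - \L_\ep\hat u_\ep$. Thus $w_\ep$ solves a linear parabolic equation $\p_\tau w_\ep = M_\ep w_\ep - Q_\ep$ with initial datum of size $O(\ep)$ (since $u_1(0,x)=0$ and $u_2, u_3$ are uniformly bounded). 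Crucially, the zeroth-order term $-b^2/(2\gamma\s_1^2)$ of $\L_\ep$ is common to $u_\ep$ and $\hat u_\ep$ and cancels in the subtraction, so $M_\ep$ has no undifferentiated $w_\ep$ term.

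The main obstacle is running the maximum-principle argument uniformly in $\ep$ when the source $Q_\ep$ blows up as $\tau\downarrow 0$. Since $M_\ep$ annihilates functions of $\tau$ alone, any nonnegative $\Phi_\ep(\tau)$ satisfying $\Phi_\ep(0)\ge\sup_{x,y}|w_\ep(0,x,y)|$ and $\Phi_\ep'(\tau) \ge \|Q_\ep(\tau,\cdot,\cdot)\|_\infty$ is a supersolution for both $\pm w_\ep$, yielding $|w_\ep|\le\Phi_\ep$ by comparison. To cope with the singular source, I would regularize the payoff at a scale $\delta_\ep$, producing smooth $u_0^\delta, u_1^\delta, \hat u_\ep^\delta$ whose derivatives are bounded uniformly on $[0,T]$, and then separately control $|u_\ep - u_\ep^\delta|$ and $|u_0 - u_0^\delta|$ via standard parabolic stability. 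On the smoothed problem the integral $\int_0^\tau \|Q_\ep^\delta(s,\cdot)\|_\infty\,ds$ is bounded by $C\ep|\log\delta_\ep|$, and the regularization error is of order $\delta_\ep$. Optimizing $\delta_\ep\sim\ep$ balances these contributions and produces the claimed $O(-\ep\log\ep)$ bound for $|u_\ep - (u_0+\sqrt{\ep}\,u_1)|$.

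Finally, the same argument applied to $\tilde u_\ep$ is much simpler because $\tilde u_\ep$ is $x$-independent, starts from zero initial datum, and has the explicit correction $\tilde u_1 = -\tilde A\tau$; no kinked payoff enters, so one obtains $|\tilde u_\ep - (\tilde u_0+\sqrt{\ep}\,\tilde u_1)| = O(\ep)$ without a logarithm. Combining via $P_\ep = \tilde u_\ep - u_\ep$ and the triangle inequality delivers the theorem.
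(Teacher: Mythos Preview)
Your strategy is genuinely different from the paper's, and it has a real gap that stems from misidentifying where the logarithmic loss comes from.

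\textbf{What the paper does.} The paper does not linearize the quadratic term and does not regularize the payoff. It builds explicit sub- and supersolutions
\[
u_0+\sqrt{\ep}\,u_1+\ep U_2+\ep^{3/2}u_3 \;\mp\; \ep\psi_\ep(y)\;\mp\;\ep(C+\bar c_2^2)\tau,
\qquad \psi_\ep(y)=\ep(y-m)^2+D_\ep,
\]
and checks the differential inequalities directly for the nonlinear operator $\L_\ep$. The barrier has a $y$-quadratic piece precisely because the correctors $U_2,u_3$ are \emph{not} bounded in $y$: by Lemma~\ref{bounded-phi-prime} the Poisson-equation solutions $\phi_1,\phi_2$ (and hence $U_2$, $u_3$) grow logarithmically in $y$. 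The constant $D_\ep$ must dominate this logarithmic growth against the $\ep(y-m)^2$ term, and Lemma~\ref{D_ep} shows this forces $D_\ep=O(-\log\ep)$. That is the origin of the $-\ep\log\ep$ in the paper.

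\textbf{The gap in your approach.} You assert that ``$u_2,u_3$ are uniformly bounded'' and that the initial datum $w_\ep(0,\cdot,\cdot)$ is of size $O(\ep)$. This is false: $U_2(0,x,y)=\phi_1(y)+\phi_2(y)\bigl(\p_xu_0-\p_{xx}^2u_0\bigr)(0,x)$ is unbounded in $y$, so $\sup_{x,y}|w_\ep(0,x,y)|=\infty$ and your $\tau$-only barrier $\Phi_\ep(\tau)$ cannot dominate $|w_\ep|$ at $\tau=0$. More fundamentally, on the unbounded $y$-domain you cannot run the maximum principle for $w_\ep$ without some $y$-dependent barrier controlling the logarithmic tails of $\hat u_\ep$; this is exactly the role of $\psi_\ep$ in the paper. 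A secondary issue is that Theorem~V.8.1 of \cite{Lad67} gives $\p_y u_\ep$ bounded for each fixed $\ep$, but the bound depends on the coefficient norms of \eqref{L_ep}, which are $O(1/\ep)$; your claim that $\p_y u_\ep+\p_y\hat u_\ep$ is bounded \emph{uniformly in $\ep$} is not supplied by that reference.

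\textbf{On the payoff kink.} You correctly notice that the higher $x$-derivatives of $u_0$ blow up as $\tau\downarrow0$, and you introduce a mollification scale $\delta_\ep$ to deal with this, obtaining the $\log\ep$ from $\int_0^\tau\|Q_\ep^\delta\|_\infty\,ds\sim\ep|\log\delta_\ep|$. This is a different mechanism from the paper's, where the $\log\ep$ is entirely due to the $y$-growth of the correctors. The paper's argument, as written, treats $U_2,u_3$ as bounded in $(\tau,x)$ and does not introduce a payoff regularization; if you want to pursue your route rigorously you would still need to add a $y$-barrier on top of the mollification, and then disentangle which of the two effects actually governs the final rate.
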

\begin{proof}
We use a maximum/comparison principle argument to prove this result.

Define \begin{equation}\label{psi}\psi_\ep(y)=\ep (y-m)^2+D_\ep,
\end{equation}
where $D_\ep>0$ is  a constant chosen such that
\begin{equation}\label{psi-ineq}
\frac{\psi_\ep(y)}{2}>|u_2(\tau,x,y)|+\sqrt{\ep}\ |u_3(\tau,x,y)|\end{equation} for all  $(\tau,x,y)\in [0,T]\times\R\times\R.$ It is possible to find such a $D_\ep$, as $U_2$ and $u_3$ have at most logarithmic growth in $y$ (see \eqref{u0}, \eqref{u1-formula}, \eqref{U2}, \eqref{u3-equation} and Lemma \ref{bounded-phi-prime}) and are bounded functions of $\tau$ and $x$ (see \eqref{U2} and \eqref{u0}).  It is easy to check that $D_\ep=O(-\log \ep)$, see Lemma \ref{D_ep} in the Appendix.

Choose a constant $C>0$  large enough so that 
\begin{equation}\label{const-C}
C+2(y-m)^2>\p_\tau U_2-\hat{\L}_0U_2-\L_1 u_3.
\end{equation}
It is possible to choose such a $C$ since $U_2$ and $u_3$ are bounded functions of $\tau$ and $x$ and have at most logarithmic growth in $y$ (see \eqref{u0}, \eqref{u1-formula}, \eqref{U2},  \eqref{u3-equation} and Lemma \ref{bounded-phi-prime}).

Define \[\ul{u}_\ep(\tau,x,y):=u_0(\tau,x)+\sqrt{\ep} u_1(\tau, x)+\ep U_2(\tau,x,y)+\ep^{3/2}u_3(\tau,x,y)-\ep \psi_\ep(y)-\ep (C+\bar{c}_2^2)\tau,\]
and 
\[\ol{u}_\ep(\tau,x,y):= u_0(\tau,x)+\sqrt{\ep} u_1(\tau, x)+\ep U_2(\tau,x,y)+\ep^{3/2}u_3(\tau,x,y)+\ep \psi_\ep(y)+\ep (C+\bar{c}_2^2)\tau, \]
where $\bar{c}_2$ is the upper bound on $\s_2$ in Assumption \ref{assmpn}, $u_0, u_1, U_2$ and $\psi_\ep$ are defined in \eqref{u0}, \eqref{u1-formula}, \eqref{U2} and  \eqref{psi} respectively. 

We compute
\begin{align*}
\p_\tau \ul{u}_\ep-\L_\ep\ul{u}_\ep=&\p_\tau u_0+\sqrt{\ep}\p_\tau u_1+\ep\p_\tau U_2+\ep^{3/2}\p_\tau u_3-\ep(C+\bar{c}_2^2)\\
&-\L_0u_0-\sqrt{\ep}\hat{\L}_0 u_1-\ep\hat{\L}_0U_2-\ep^{3/2}\hat{\L}_0u_3\\
&-\sqrt{\ep}\L_1 U_2+\sqrt{\ep}\L_1\psi_\ep-\ep\L_1 u_3\\
&-\B U_2-\sqrt{\ep}\B u_3+\B\psi_\ep-\ep\frac{\gamma(1-\rho^2)\s^2_1(y)}{2}\l(\p_y U_2+\sqrt{\ep}\p_y u_3-\p_y\psi_\ep\r)^2,
\end{align*}
where we have used the independence of $y$ in $u_0$ and $u_1$ to eliminate some terms. Using Remark \ref{Rem-U2}, \eqref{Poisson-u2},  \eqref{u3-equation} and $\B \psi_\ep< -\ep 2(y-m)^2+\ep\bar{c}_2^2$, we get
\begin{align*}
\p_\tau \ul{u}_\ep-\L_\ep\ul{u}_\ep
<& -\ep(C+2(y-m)^2)\\
&+\ep^{3/2}\l[-2(\rho+\eta\sqrt{1-\rho^2})\frac{b(y)\s_2(y)}{\s_1(y)}(y-m)+\p_\tau u_3-\hat{\L}_0u_3\r]\\
&+\ep\Bigl[\p_\tau U_2-\hat{\L}_0U_2-\L_1u_3-\frac{\gamma(1-\rho^2)\s^2_1(y)}{2}\l(\p_y U_2+\sqrt{\ep}\p_y u_3-\ep 2(y-m)\r)^2\Bigr].
\end{align*}
The term $-\ep(C+2(y-m)^2)$ dominates the other terms on the right hand side of the above equation, uniformly, for small enough $\ep$. To see this, first look at the terms of  $O(\ep^{3/2})$. The first term grows linearly in $y$ as $b$ and $\s_2$ are bounded and $\s_1$ is bounded away from $0$. The terms that depend on $u_3$ grow at most logarithmically in $y$ and are bounded in $\tau$ and $x$. This can be seen from \eqref{u0}, \eqref{u1-formula}, \eqref{u3-equation}, \eqref{u2} and Lemma \ref{bounded-phi-prime}. Thus the $O(\ep^{3/2})$ terms are dominated by  $\ep(C+2(y-m)^2)$ uniformly for all $\tau,x,y$, for small enough $\ep$.  We now turn to the remaining term of $O(\ep)$ in the third line of the above inequality. By choice of $C$ in \eqref{const-C}, this term is dominated by  $\ep(C+2(y-m)^2)$.
 Therefore the term $-\ep(C+2(y-m)^2)$ dominates the other terms  for small enough $\ep$ making the right hand side of the above inequality negative. Hence,
 \begin{equation}\label{max-principle-ineq1}
\p_\tau \ul{u}_\ep(\tau,x,y)-\L_\ep\ul{u}_\ep(\tau,x,y)<0, \quad \forall (\tau,x,y)\in (0,T)\times \mathbb R\times \mathbb R,
\end{equation}
for small enough $\ep$.
Similarly, it can be shown that \begin{equation}\label{max-principle-ineq2}
\p_\tau \ol{u}_\ep(\tau,x,y)-\L_\ep\ol{u}_\ep(\tau,x,y)>0, \quad \forall (\tau,x,y)\in (0,T)\times \mathbb R\times \mathbb R,
\end{equation}
for small enough $\ep$.

By the maximum principle, we can show that $\ul{u}_\ep\leq u_\ep\leq \ol{u}_\ep$, as follows. 
The goal is to prove that at the point of maximum of $\ul{u}_\ep-u_\ep$ we have $\ul{u}_\ep\leq u_\ep$. For this  we will first need to ensure the maximum of  $\ul{u}_\ep-u_\ep$ is attained and to this end we first perturb our function $\ul{u}_\ep$ slightly.
 Define \[\ul{u}_\ep^\delta=\ul{u}_\ep-\delta\l(\sqrt{1+x^2}+\frac{1}{T-\tau}\r),\] where $0<\delta\ll1$. Note that the first and second order derivatives of $\sqrt{1+x^2}$ are bounded.  Therefore the strict inequalities \eqref{max-principle-ineq1} and \eqref{max-principle-ineq2} still hold if $\underline{u}_\ep$ is replaced with $\underline{u}_\ep^\delta$, for small enough $\delta$.  
Because of the  term  $-\psi_\ep(y)$ in $\ul{u}_\ep$  which dominates the growth in $y$, $\ul{u}_\ep\to-\infty$ when $|y|\to\infty$. Also observe that $\ul{u}_\ep$ is bounded in $\tau$ and $x$, so the perturbation of $\ul{u}_\ep$ by $-\delta\l(\sqrt{1+x^2}+\frac{1}{T-\tau}\r)$ ensures $\ul{u}^\delta_\ep\to -\infty$ when either  $|x|\to\infty$ or $\tau\to T$. Since $u_\ep$ is bounded, we see that $\ul{u}^\delta_\ep-u_\ep\to -\infty$ when either $|y|\to\infty$, $|x|\to\infty$ or $\tau\to T$. Therefore, $\ul{u}^\delta_\ep-u_\ep$ must attain its maximum at some finite point, say $(\tau_0,x_0,y_0)$, in the domain $[0,T)\times\R\times\R$. For  $(\tau_0,x_0,y_0)$  in the interior of $(0,T)\times \mathbb R\times \mathbb R$, we have \begin{equation}\label{max-principle}0=\p_\tau u_\ep(\tau_0,x_0,y_0)-\L_\ep u_\ep(\tau_0,x_0,y_0)\leq \p_\tau\ul{u}^\delta_\ep(\tau_0,x_0,y_0)- \L_\ep\ul{u}^\delta_\ep(\tau_0,x_0,y_0)<0;\end{equation} which is a contradiction. Therefore the point of maximum must occur  at the boundary $\tau=0$. Recall that at $\tau=0$, $u_0(0,x)=u_\ep(0,x,y)=\ -[K-Ke^x]^+$ and $u_1(0,x)=0$ from \eqref{u1-formula}.  Then, by the choice of $D_\ep$ in the definition of $\psi_\ep$ i.e.\ \eqref{psi-ineq}, we get $\ul{u}^\delta_\ep(0,x,y) <u_\ep(0,x,y)$, for all $(x,y)\in\R\times\R$ and thus $\ul{u}^\delta_\ep<u_\ep$ everywhere, for all $\delta\ll1$. Taking the limit at $\delta\to 0$, we have the desired comparison \begin{equation}\label{ineq1}\ul{u}_\ep(\tau,x,y)\leq u_\ep(\tau,x,y), \quad \forall (\tau,x,y)\in [0,T]\times \mathbb R\times \mathbb R.\end{equation}
 By a similar argument, we get \begin{equation}\label{ineq2}
\ol{u}_\ep\geq u_\ep(\tau,x,y), \quad \forall (\tau,x,y)\in [0,T]\times \mathbb R\times \mathbb R.
\end{equation}

Putting inequalities \eqref{ineq1} and \eqref{ineq2} together and using \eqref{psi-ineq}, we get
\[ |u_\ep(t,x,y)-(u_0(\tau,x)+\sqrt{\ep}u_1(\tau,x))|\leq \frac{\ep\psi_\ep(y)}{2}+\ep(C+\bar{c}_2^2)\tau,\] for all $(\tau,x,y)\in [0,T]\times \mathbb R\times \mathbb R$.  Recall that the constant $D_\ep$ in the definition of $\psi_\ep$ is of order $-\log(\ep)$ and that $\psi_\ep$ has a term with quadratic growth in $y$, which gives us 
\[|u_\ep(t,x,y)-(u_0(\tau,x)+\sqrt{\ep}u_1(\tau,x))|\leq O(-\ep \log\ep)+O(\ep^2)y^2.\]

We can repeat the same argument for $\tilde{u}_\ep$, which leads to the desired result
\[ |P_\ep(t,x,y)-(P_0(\tau,x)+\sqrt{\ep}P_1(\tau,x))|\leq O(-\ep \log\ep)+O(\ep^2)y^2,\] for all $(\tau,x,y)\in [0,T]\times \mathbb R\times \mathbb R$. 
\end{proof}

\section{Implied Volatility}\label{sec:IV}
In theory, we could use the corrected option price formula for calibration. However, option prices are typically quoted in terms of their implied volatility and so it is useful to derive a corrected implied volatility formula corresponding to the corrected option price.  Let $I_\ep$ denote the implied volatility corresponding to the put option price $P_\ep$. Recall that $x=\ln(S/K)$ and  the Black-Scholes formula for put option prices with volatility $\s$ is given by the formula
\[P_{BS}(\tau,x;\s)=KN(-d_2)-Ke^xN(-d_1),\]
where $N(z)=\frac{1}{\sqrt{2\pi}}\int_{-\infty}^z e^{-y^2/2}dy$, $d_1=\frac{x+\frac1 2 \s^2\tau}{\s\sqrt{\tau}}$ and 
$d_2=\frac{x-\frac1 2 \s^2\tau}{\s\sqrt{\tau}}$. 

By definition, the implied volatility $I_\ep$ is obtained by setting 
\begin{equation}\label{IV-definition}P_{BS}(\tau,x;I_\ep(\tau,x,y))=P_\ep(\tau,x,y).\end{equation}
From  \eqref{P0}, we know that the leading order term of the put option price $P_0(\tau,x)=P_{BS}(\tau,x;\overline{\s_1})$. 
 Thus, the leading order term of the implied volatility is simply $\overline{\s_1}$.  To obtain a correction term for the asymptotic implied volatility we will expand $I_\ep$ about $\overline{\s_1}$ in powers of $\sqrt{\ep}$ as follows \begin{equation}\label{imp-vol-expansion}
I_\ep(\tau,x,y)=\overline{\s_1}+\sqrt{\ep}I_1(\tau,x,y)+\ep I_2(\tau,x,y)+\cdots.\end{equation}
Using the expansion of $I_\ep$ in \eqref{IV-definition}, we get
\begin{equation}\label{equating}
\begin{split}
&P_{BS}(\tau,x;\overline{\s_1})+\sqrt{\ep}I_1\frac{\p P_{BS}}{\p\s}(\tau,x;\overline{\s_1})+\cdots\\
&=P_0(\tau,x)+\sqrt{\ep}P_1(\tau,x,y)+\cdots.
\end{split}
\end{equation} 
Thus the correction term to the implied volatility is 
\begin{equation}\label{I1}
I_1=P_1(\tau,x,y)\l[\frac{\p P_{BS}}{\p\s}(\tau,x;\overline{\s_1})\r]^{-1}.
\end{equation}

Differentiating the Black-Scholes formula, we get
\[\frac{\p P_{BS}}{\p\s}=\frac{Ke^{-d_2^2/2}\sqrt{\tau}}{\sqrt{2\pi}}.\]
Substituting the formula for $P_1$ in \eqref{I1}, we can rewrite \eqref{imp-vol-expansion}  as 
\begin{equation}\label{Iep}\begin{split}I_\ep=&\overline{\s_1}-\sqrt{\ep}\frac{\sqrt{2\pi}}{K\sqrt{\tau}e^{-d_2^2/2}}\bigl[\tau\bigl( -A\p^3_{xxx}P_0(\tau,x)+(A+B)\p^2_{xx}P_0(\tau,x)\\
&-B\p_xP_0(\tau,x)\bigr) \bigr]+o(\sqrt{\ep}),\end{split}\end{equation}
 where $A$ and $B$ are defined in \eqref{A} and \eqref{B} respectively.
 
The constants $A$ and $B$ can be simplified to:
\begin{equation}\label{A-alt}
A=\int_{-\infty}^\infty\rho\frac{\s_1(y)}{\s_2(y)}\l(\int_{-\infty}^y (\s_1^2-\overline{\s_1}^2)\pi(dy)\r)dy,
\end{equation}
and
\begin{equation}\label{B-alt}
B=\int_{-\infty}^\infty(\rho+\eta\sqrt{1-\rho^2})\frac{b(y)}{\s_1(y)\s_2(y)}\l(\int_{-\infty}^y (\s_1^2-\overline{\s_1}^2)\pi(dy)\r)dy.
\end{equation}
On substituting the derivatives of $P_0$ in \eqref{Iep}, we get
\begin{equation}
\begin{split}
I_\ep&=\overline{\s_1}+\sqrt{\ep}\l[ A\frac{d_2}{\overline{\s_1}^2\sqrt{\tau}}+\frac{B}{\overline{\s_1}}\r]+o(\sqrt{\ep})\\
&=\overline{\s_1}+\sqrt{\ep}\l[A\frac{x}{\overline{\s_1}^3\tau}+\frac{B-A/2}{\overline{\s_1}}\r]+o(\sqrt{\ep}).
\end{split}
\end{equation}
We see that, as in the no-arbitrage pricing case (see \cite{FPS00}),  the corrected implied volatility is an affine function of log-moneyness-to-maturity ratio (LMMR), where
\[LMMR=\frac{\ln(strike\ price/asset\ price)}{time\ to\ maturity}=\frac{-x}{\tau}.\]

For calibration purposes, it is convenient to rewrite the formula for implied volatility, $I_\ep$, as 
\begin{equation}\label{IV-surface}
I_\ep=a(LMMR)+d+o(\sqrt{\ep}),\end{equation}
where
$a=\frac{-\sqrt{\ep}A}{\overline{\s_1}^3}$, and 
$d=\overline{\s_1}+\frac{\sqrt{\ep}}{\overline{\s_1}}\l(B-\frac{A}{2}\r)$. On calibration of $a$ and $d$ from market implied volatility data, we can determine $A$ and $B$ from the formulas 
$A=\frac{-\overline{\s_1}^3 a}{\sqrt{\ep}}$ and $B=\frac{1}{\sqrt{\ep}}\l((d-\overline{\s_1})\overline{\s_1}-\frac{\overline{\s_1}^3 a}{2} \r)$.

In \cite{FPS00} an affine function of LMMR is fitted to S\&P 500 European call option implied volatility data and $a$ and $d$ are estimated to be -0.154 and 0.149 respectively. Using these estimates we graph the implied volatility surface given by \eqref{IV-surface}, see Figure \ref{fig1}.

\begin{figure}[p]
  \centering
      \includegraphics[width=4.5in]{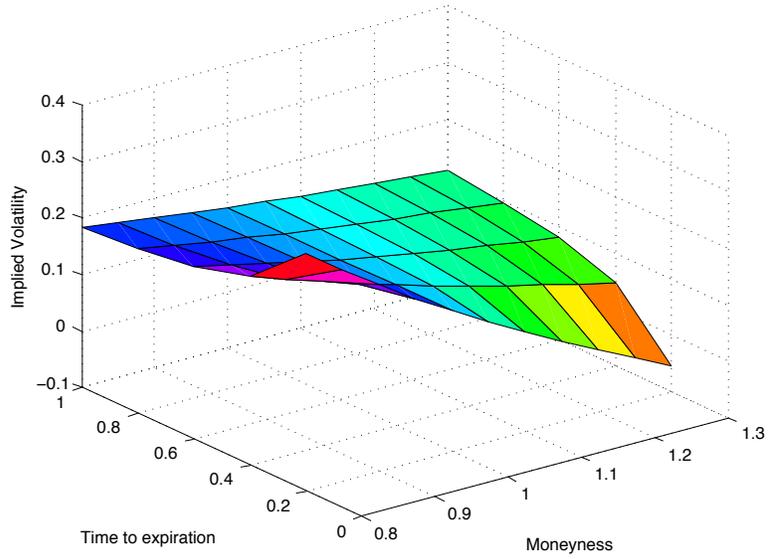}
  \caption{Implied volatility surface given by \eqref{IV-surface} with $a=-0.154$ and $d=0.149$.}
  \label{fig1}
\end{figure}

Taking $Y_t$ to be an OU-process and $\sigma_1(y)$ an arctangent volatility function, we compute the implied volatility from the corrected asymptotic option price formula in Theorem 1. To be precise, we take $\sigma_1(y)=0.3+\frac{0.5}{\pi}\arctan(y)$, $\sigma_2(y)=0.2$, $m=0$, $b(y)=1$ and $\rho=-0.2$ in our stochastic volatility model \eqref{FSDEs}. We fix $\tau=0.25$ and assume $\ep=0.004$.  Figure \ref{fig2} gives the implied volatility as a function of log moneyness i.e.\ $-x=\log K/S$, for three different values of the risk parameter $\eta$. The skew of the implied volatility function is clearly seen. Observe that as the value of $\eta$ increases, the implied volatility gets shifted down. This was also observed in \cite{SS} (see Figure 3 in \cite{SS}). In Figure \ref{fig2} implied volatility appears to be an almost  linearly decreasing function of log moneyness. This agrees with the formally derived formula for corrected  asymptotic implied volatility in \eqref{IV-surface}.

\begin{rem} The corrected implied volatility formula only depends on the volatility risk parameter $\eta$ and the correlation term $\rho$, and has no dependence on the risk aversion parameter $\gamma$. By calibration,  the correlation coefficient $\rho$ and the risk parameter $\eta$ can be obtained from \eqref{A-alt} and \eqref{B-alt}.
\end{rem}

\begin{figure}[p]
  \centering
      \includegraphics[width=4in]{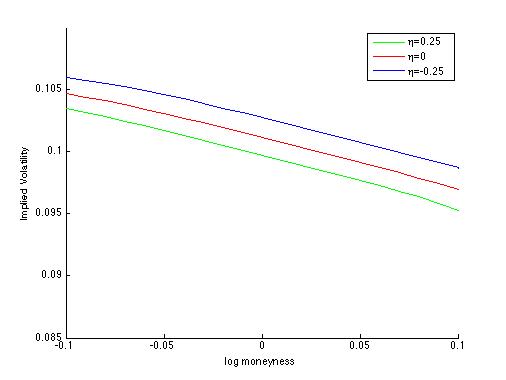}
  \caption{Implied volatility as a function of log moneyness for the arctangent stochastic volatility model, when $\eta=-0.25,0,0.25$.}
  \label{fig2}
\end{figure}


\appendix

\section{Appendix}

\setcounter{lemma}{0}
    \renewcommand{\thelemma}{\Alph{section}.\arabic{lemma}}
    \setcounter{equation}{0}
    \numberwithin{equation}{section}
 
\begin{lemma}\label{bounded-phi-prime}
Suppose $f\in C_b(\R)$ is a bounded continuous function which is centered with respect to the invariant distribution $\pi$, i.e. $\int fd\pi=0$. Then,  the Poisson equation \[\B v=f,\] has a solution that grows at most logarithmically and has bounded derivative $v^\prime$.
\end{lemma}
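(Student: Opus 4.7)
I would prove the lemma by explicitly solving the Poisson equation as a linear second-order ODE, using the centering condition to pick out the solution with the right decay at infinity.

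The plan is to reduce order by setting $w(y)=v'(y)$, so that the equation $\B v=f$ becomes the first-order linear ODE
\[
\tfrac12\sigma_2^2(y)w'(y)+(m-y)w(y)=f(y).
\]
Multiplying by the integrating factor $\mu(y):=\exp\!\bigl(\int_0^y\tfrac{2(m-z)}{\sigma_2^2(z)}dz\bigr)$ — which is, up to normalization, the numerator of the invariant density $\pi$ in \eqref{inv-meas} — converts this to $(\mu w)'=\tfrac{2f\mu}{\sigma_2^2}$. Hence every solution has the form
\[
w(y)=\frac{1}{\mu(y)}\!\left[\,C+\int_{y_0}^{y}\frac{2f(z)\mu(z)}{\sigma_2^2(z)}dz\right].
\]
Because $\sigma_2$ is bounded above and away from zero, $\mu$ decays like a Gaussian at $\pm\infty$, so $1/\mu$ grows super-exponentially. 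To prevent $w$ from blowing up, I must kill the constant by exploiting the centering hypothesis $\int f\,d\pi=0$, which in unnormalized form reads $\int_{-\infty}^{\infty}\tfrac{2f(z)\mu(z)}{\sigma_2^2(z)}dz=0$. I choose
\[
w(y)=\frac{1}{\mu(y)}\int_{-\infty}^{y}\frac{2f(z)\mu(z)}{\sigma_2^2(z)}dz=-\frac{1}{\mu(y)}\int_{y}^{\infty}\frac{2f(z)\mu(z)}{\sigma_2^2(z)}dz,
\]
the second expression being valid by centering.

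The key estimate, and the main technical step, is to show $w$ is bounded (in fact $O(1/|y|)$) at infinity. For $y>m$ large I use $\mu(z)=-\mu'(z)\tfrac{\sigma_2^2(z)}{2(z-m)}$, so
\[
\int_y^\infty \mu(z)\,dz=\int_y^\infty\bigl(-\mu'(z)\bigr)\frac{\sigma_2^2(z)}{2(z-m)}dz\le\frac{\overline{c_2}^{\,2}}{2(y-m)}\,\mu(y),
\]
by pulling the monotonically decreasing factor $\frac{1}{2(z-m)}\le\frac{1}{2(y-m)}$ out of the integral. Combined with $\sigma_2^2\ge\underline{c_2}^{\,2}$ and boundedness of $f$, this yields $|w(y)|\le \frac{\|f\|_\infty\,\overline{c_2}^{\,2}}{\underline{c_2}^{\,2}(y-m)}$ for large $y$; the symmetric estimate at $-\infty$ uses the other representation of $w$ and the same integration-by-parts device. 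So $w=v'$ is bounded on all of $\R$, and in fact $|v'(y)|=O(1/|y|)$ as $|y|\to\infty$.

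Finally, $v(y):=\int_0^y w(s)\,ds$ solves the Poisson equation by construction (since $w$ is $C^1$ and satisfies the reduced ODE, $\B v=f$ pointwise), has bounded derivative by the step above, and the $O(1/|s|)$ tail of $w$ integrates to $O(\log|y|)$, giving the claimed at-most-logarithmic growth. The main obstacle is the tail estimate on the ratio $\mu(y)^{-1}\int_y^\infty f\mu/\sigma_2^2$, which is precisely where the bounds on $\sigma_2$ in Assumption \ref{assmpn} and the integration-by-parts identity above are crucial; without the sharp $1/(y-m)$ factor one would only get boundedness of $w$ and linear growth of $v$, not the logarithmic bound needed elsewhere in the paper.
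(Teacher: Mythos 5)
Your proof is correct and follows essentially the same route as the paper's: write $\B v=f$ as a first-order linear ODE in $w=v'$, solve with the integrating factor $\mu(y)=\exp\bigl(\int_0^y \tfrac{2(m-z)}{\sigma_2^2(z)}dz\bigr)$, use the centering hypothesis to switch between the $\int_{-\infty}^y$ and $-\int_y^\infty$ representations, and then extract a $1/(y-m)$ decay for $w$ from the identity $\mu(z)=-\mu'(z)\tfrac{\sigma_2^2(z)}{2(z-m)}$ (the paper phrases this as an integration by parts with the factor $\tfrac{1}{-u}\cdot\tfrac{-2u}{\sigma_2^2(u)}$, but it is the same manipulation). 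Integrating $|v'(y)|=O(1/|y|)$ then gives the logarithmic bound, exactly as in the paper.
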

\begin{proof}
We construct a solution that satisfies the required growth condition.

Suppose $f$ is bounded and $v$ is a solution of $\B v=f$. Recall the definition of the differential operator $\B$ in \eqref{generator-B}. Multiplying the equation, $\B v=f$, by the integrating factor $\exp\{\int_m^y\frac{2(m-z)}{\s_2^2(z)}dz\}$, we get
\begin{equation*}
\l(e^{\int_m^y\frac{2(m-z)}{\s_2^2(z)}dz}v^\prime(y)\r)^\prime=\frac{2f(y)}{\s_2^2(y)}e^{\int_m^y\frac{2(m-z)}{\s_2^2(z)}dz}
\end{equation*}
Without loss of generality, we can assume $m=0$, so
\begin{equation*}
\l(e^{\int_0^y\frac{-2z}{\s_2^2(z)}dz}v^\prime(y)\r)^\prime=\frac{2f(y)}{\s_2^2(y)}e^{\int_0^y\frac{-2z}{\s_2^2(z)}dz}
\end{equation*}
Integrating, we get
\begin{equation}\label{soln}
v^\prime(y)=e^{\int_0^y\frac{2z}{\s_2^2(z)}dz}\int_{-\infty}^y \frac{2f(u)}{\s_2^2(u)}e^{\int_0^u\frac{-2z}{\s_2^2(z)}dz}du,
\end{equation}
under the assumption that $\lim_{y\to-\infty}e^{\int_0^y\frac{-2z}{\s_2^2(z)}dz}v^\prime(y)$ is $0$. Let us denote the right hand side of \eqref{soln} by the function $G(y)$ i.e.
\[G(y):=e^{\int_0^y\frac{2z}{\s_2^2(z)}dz}\int_{-\infty}^y \frac{2f(u)}{\s_2^2(u)}e^{\int_0^u\frac{-2z}{\s_2^2(z)}dz}du.\]
Using l'h\^{o}pital's rule as $|y|\to\infty$ and the boundedness of $f$ and $\s_2$, we see that $G$ is a bounded function. Then $v(y):=\int_0^yG(u)du$ gives us a solution to the Poisson equation $\B v=f$. We will show that this solution $v$ to the Poisson equation has at most logarithmic growth.

Recall the invariant measure $\pi$ given in \eqref{inv-meas}. Using the centering condition $\int fd\pi$, we get
\begin{equation*}
v^\prime(y)=e^{\int_0^y\frac{2z}{\s_2^2(z)}dz}\int_y^\infty \frac{2f(u)}{\s_2^2(u)}e^{\int_0^u\frac{-2z}{\s_2^2(z)}dz}du.
\end{equation*}
Let $y>1$. By the boundedness of $f$ and $\s_2$, we can bound
\begin{equation}\label{bd1}\begin{split}
|v^\prime(y)|&\leq ce^{\int_0^y\frac{2z}{\s_2^2(z)}dz}\int_y^\infty \frac{2}{\s_2^2(u)}e^{\int_0^u\frac{-2z}{\s_2^2(z)}dz}du\intertext{(for some constant $c>0$)}
&=ce^{\int_0^y\frac{2z}{\s_2^2(z)}dz}\int_y^\infty \frac{1}{-u}\frac{-2u}{\s_2^2(u)}e^{\int_0^u\frac{-2z}{\s_2^2(z)}dz}du,\intertext{which on integrating by parts gives}
&=ce^{\int_0^y\frac{2z}{\s_2^2(z)}dz}\l[\frac{1}{y}e^{-\int_0^y\frac{2z}{\s_2^2(z)}dz}-\int_y^\infty \frac{1}{u^2}e^{\int_0^u\frac{-2z}{\s_2^2(z)}dz}du \r]\\
&\leq \frac{c_1}{y} 
\end{split}
\end{equation}
for some $c_1>0$.

Let $y<-1$. We repeat the same argument using the bound
\begin{equation}\label{bd2}\begin{split}
|v^\prime(y)|&\leq e^{\int_0^y\frac{2z}{\s_2^2(z)}dz}\int_{-\infty}^y \frac{2|f(u)|}{\s_2^2(u)}e^{\int_0^u\frac{-2z}{\s_2^2(z)}dz}du\\
&=ce^{\int_0^y\frac{2z}{\s_2^2(z)}dz}\l[\frac{1}{-y}e^{-\int_0^y\frac{2z}{\s_2^2(z)}dz}-\int_{-\infty}^y \frac{1}{u^2}e^{\int_0^u\frac{-2z}{\s_2^2(z)}dz}du \r]\\
&\leq \frac{c_2}{-y}
\end{split}\end{equation}
for some $c_2>0$.
The boundedness of  $|v^\prime(y)|$ for all $y$ together with the bounds \eqref{bd1} and \eqref{bd2} when $|y|>1$ gives us \[|v(y)|\leq C_1\log(1+|y|)+C_2, \quad \forall y\in\R,\] for some $C_1,C_2>0$.
\end{proof}

\begin{lemma}\label{D_ep}
$D_\ep=O(-\log \ep)$.
\end{lemma}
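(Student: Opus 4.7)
The plan is to translate the defining inequality for $D_\ep$, namely $\psi_\ep(y)/2 > |u_2(\tau,x,y)| + \sqrt{\ep}\,|u_3(\tau,x,y)|$, into a one-variable optimization problem in $y$ and then solve it explicitly.

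First I would apply Lemma \ref{bounded-phi-prime} to the Poisson equations for $\phi_1$ and $\phi_2$ that build $U_2$ via \eqref{U2}, and to \eqref{u3-equation} for $u_3$. Combined with the fact (noted in the text following \eqref{u0} and \eqref{U2}) that $U_2$ and $u_3$ are bounded in $(\tau, x)$, this yields a uniform estimate of the form
\[|u_2(\tau,x,y)| + \sqrt{\ep}\,|u_3(\tau,x,y)| \leq K_1 \log(1+|y|) + K_2,\]
with $K_1, K_2$ independent of $(\tau,x,y)$ and of $\ep \in (0,1]$. Here it is crucial that $K_1$ and $K_2$ be chosen free of $\ep$; the $\sqrt{\ep}$ multiplying $|u_3|$ only helps. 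The defining inequality \eqref{psi-ineq} then reduces to requiring
\[D_\ep \;\geq\; 2\sup_{y \in \R} \bigl[K_1 \log(1+|y|) + K_2 - \tfrac{\ep}{2}(y-m)^2\bigr],\]
so it suffices to show that this supremum is $O(-\log \ep)$.

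Next I would maximize the one-variable expression on the right-hand side. For $y > m$, the first-order condition $K_1/(1+y) = \ep(y-m)$ produces a maximizer $y^* \sim \sqrt{K_1/\ep}$; at $y^*$ the quadratic penalty $\ep(y^*-m)^2$ is of order $K_1$, while $2K_1 \log(1+y^*) = K_1 \log(K_1/\ep) + O(1) = O(-\log \ep)$. The case $y < m$ is symmetric. Hence $D_\ep = C(-\log \ep)$ is admissible for some universal constant $C > 0$ and all sufficiently small $\ep$, giving $D_\ep = O(-\log \ep)$ as desired.

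This is essentially a bookkeeping argument rather than one with a real obstacle. The only subtlety worth emphasizing is the uniformity of $K_1$ and $K_2$ in $(\tau,x)$ and in $\ep$, which is precisely why invoking Lemma \ref{bounded-phi-prime} (a uniform $\log$-growth bound, not merely a pointwise estimate) is necessary; the $-\log \ep$ scale then emerges from the balance between the logarithmic growth of $u_2, u_3$ in $y$ and the quadratic penalty $\ep(y-m)^2$ in $\psi_\ep$.
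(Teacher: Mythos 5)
Your proposal is correct and follows essentially the same approach as the paper: bound $|u_2|+\sqrt{\ep}|u_3|$ by a uniform logarithmic growth function of $y$ via Lemma~\ref{bounded-phi-prime}, then maximize the resulting one-variable expression $c_1\log(1+\cdot)-\ep(\cdot)^2$ by a first-order condition to find that the balance point occurs at $|y-m|\sim\sqrt{c_1/\ep}$, yielding a supremum of order $-\log\ep$.
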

\begin{proof}
Recall that $D_\ep$ is chosen to be a positive number such that \[D_\ep>|u_2(\tau,x,y)|+\sqrt{\ep}\ |u_3(\tau,x,y)|-\ep(y-m)^2,\] for all $\tau,x,y$.  
Since $u_2$ and $u_3$ are bounded in $\tau$ and $x$ and have at most logarithmic growth in $y$, we can write
\[|u_2(\tau,x,y)|+\sqrt{\ep}\ |u_3(\tau,x,y)|\leq c_1\log(1+(y-m)^2)+c_2\quad \text{ for all }\tau,x,y,\] for some positive constants $c_1$ and $c_2$. 
So it suffices to choose $D_\ep$ such that
 \begin{equation}\label{ineq}D_\ep>c_1\log(1+(y-m)^2)+c_2-\ep(y-m)^2,\end{equation} for all $\tau,x,y$.  
The quadratic term $(y-m)^2$ grows faster than $\log(1+(y-m)^2)$ for large $|y|$, 
therefore the maximum of the right hand side in \eqref{ineq} is attained at finite values of $y$.   We will determine the maximum value of the right hand side of \eqref{ineq} and choose $D_\ep$ to be larger than that. 

Let $y_0$ denote a point of maximum of the right hand side of \eqref{ineq}, then 
\[\frac{d}{dy}\l[ c_1\log(1+(y-m)^2)+c_2-\ep(y-m)^2 \r]|_{y=y_0}=0,\]
i.e.\[\frac{c_12(y_0-m)}{1+(y_0-m)^2}-\ep 2(y_0-m)=0.\]
Solving for $y_0$ we get, either $y_0=m$, in which case choose $D_\ep>c_2$ or 
\[(y_0-m)^2=\frac{c_1}{\ep}-1, \] and the right hand side of \eqref{ineq} becomes
\[c_1\log(\frac{c_1}{\ep})+c_2-c_1+\ep.  \] Therefore, for small enough $\ep$, it suffices to choose $D_\ep= -2c_1\log \ep$.
 
\end{proof}


\begin{thebibliography}{10}
 \bibitem{BHP99} Buckdahn, R., Hu, Y., Peng, S. ,  Probabilistic Approach to homogenization of viscosity solutions of parabolic PDEs, NoDEA, Nonlinear Differ. Equ. Appl. 6, N.4, pp. 395 Ð 411, 1999

\bibitem{BK09}  Barrieu, P. and  El Karoui, N., Pricing, hedging, and designing derivatives with risk measures.
In R. Carmona, editor, Indifference pricing. Theory and applications, pages 75Ð146. Princeton
University Press, 2009.
 
 \bibitem{Lad67}O. A. Ladyzhenskaya, V. A. Solonnikov, and N. N. Ural'ceva. Linear and quasi-linear equations of parabolic type, Translated from the Russian by S. Smith. Translations of Mathematical Monographs, Vol. 23. American Mathematical Society, Providence, R.I., 1967
 
  \bibitem{FPS00}
 Fouque, J.-P., Papanicolaou, G. and  Sircar, R.K., Derivatives in financial markets with stochastic volatility, 
Cambridge University Press, Cambridge, 2000. 

\bibitem{FPSS03}  Fouque, J.-P., Papanicolaou, G.,  Sircar, R. K. and Solna,K.,  Multi-scale stochastic volatility asymptotics,   multi-scale
  Model. Simul.  2  (2003),  no. 1, 22--42. 
  
\bibitem{HN89}Hodges, S.D., Neuberger, A., Optimal Replication of Contingent Claims under Transaction Costs,
Review of Futures Markets 8, 222-239 (1989).

\bibitem{HPeng97} Hu, Y. and  Peng, S.,  A stability theorem of backward stochastic differential equations and its application, C. R. Acad. Sci. Paris S\'er. I Math. 324 (1997), pp. 1059Ð1064

\bibitem{KT81}
 S. Karlin and H. M. Taylor,
{\it A second course in stochastic processes},
Academic Press Inc. [Harcourt Brace Jovanovich Publishers], 1981, New York     

\bibitem{MZ04}Musiela, M., Zariphopoulou, T.: A Valuation Algorithm for Indifference Prices in Incomplete Market,
Finance and Stochastics 8, 399-414 (2004).

\bibitem{PP92} Pardoux, E. and Peng, S., Backward stochastic differential equations and quasilinear parabolic partial differential equations,  Lecture Notes in Control and Information Sciences, Springer Berlin / Heidelberg (1992), volume 176,  200-217.

\bibitem{RKar00}Rouge, R., El Karoui, N.: Pricing via Utility Maximization and Entropy, Mathematical Finance 10, 259-276 (2000)
 \bibitem{SS} Sircar, R.K. and Sturm,S.,  From Smile Asymptotics to Market Risk Measures, 	arXiv:1107.4632v1 [q-fin.GN]
\end{thebibliography}
\end{document}